\documentclass[a4paper,11pt]{article}
\topmargin 10pt \textwidth 145mm \oddsidemargin 0.1mm
 \evensidemargin 0.1mm \headsep 40pt \marginparsep 17pt
 \marginparpush 5pt
%%%%%%%%%%%%%%%%%%%%%%%%%%%%%%%%%%%%%%%%%%%%%%%%%%%%%%%%%%%%%%%%%%%%%%%%%%%%%%%%%%%%%%%%%%%%%%%%%%%%%%%%%%%%%%%%%%%%%%%%%%%%%%%%%%%%%%%%%%%%%%%%%%%%%%%%%%%%%%%%%%%%%%%%%%%%%%%%%%%%%%%%%%%%%%%%%%%%%%%%%%%%%%%%%%%%%%%%%%%%%%%%%%%%%%%%%%%%%%%%%%%%%%%%%%%%
\usepackage[dvips]{graphicx}
\usepackage{amssymb}
\usepackage{amsmath}

\usepackage{cite}

\usepackage{sgame}

\usepackage{hyperref}

\usepackage{amsmath}
\usepackage{amscd}
\usepackage{amsthm}
\usepackage{amssymb}
\usepackage{latexsym}
\usepackage{eufrak}
\usepackage{euscript}
\usepackage{epsfig}
\usepackage{graphics}
\usepackage{array}
\usepackage{enumerate}
%\input xy
%\xyoption{all}

%\numberwithin{equation}{subsection}
%\numberwithin{equation}{subsection}
%\setcounter{subsection}{1}
\setcounter{equation}{0} \addtocounter{equation}{0}

\newcommand{\be}{\begin{equation}}
\newcommand{\bel}[1]{\begin{equation}\label{#1}}
\newcommand{\qe}{\end{equation}}
\newcommand{\ee}{\end{equation}}
\newcommand{\eeq}{\end{equation}}
\newcommand{\ba}{\begin{eqnarray}}
\newcommand{\ea}{\end{eqnarray}}

%\du{M}{C}
%\dd{M}{C}

%\DeclareMathAlphabet{\matheuler}{U}{eus}{m}{n}
%\renewcommand{\theequation}{\arabic{section}.\arabic{subsection}.\arabic{equation}}
%\renewcommand{\thesection}{\S \arabic{section}}
%\newcommand{\Subsection}[1]{\subsection{#1} \setcounter{equation}{0}}
\setlength{\parindent}{0cm}

%\initfloatingfigs

\theoremstyle{theorem}
\newtheorem{theo}{Theorem}%[subsection]
\theoremstyle{theorem}
\newtheorem{algo}{Algorithm}
\theoremstyle{theorem}

\theoremstyle{remark}

\theoremstyle{remark}

\theoremstyle{corollary}

\theoremstyle{lemma}
\theoremstyle{definition}
\newtheorem{defi}{Definition}%[subsection]

\usepackage{color}

\begin{document}

\title{Periodic Strategies.  \\
A  New Solution Concept and an Algorithm for Non-Trivial Strategic
Form Games}
\author{V.K. Oikonomou$^{1}$\thanks{
voiko@physics.auth.gr, v.k.oikonomou1979@gmail.com}, J.
Jost$^{1,2}$\thanks{
jost@mis.mpg.de}\\
$^{1}$Max Planck Institute for Mathematics in the Sciences\\
Inselstrasse 22, 04103 Leipzig, Germany\\
 $^{2}$Santa Fe
Institute, New Mexico, USA }\maketitle
\begin{abstract}

We introduce a new solution concept, called periodicity,  for
selecting optimal strategies in strategic form games. This
periodicity solution concept yields new insight into  non-trivial
 games. In
mixed strategy strategic form games, periodic solutions yield values
for the utility function of each player that are equal to the Nash
equilibrium ones. In contrast to the Nash strategies, here the
payoffs of each player are  robust against what the opponent plays.
Sometimes, periodicity strategies yield higher utilities, and
sometimes the Nash strategies do, but often the utilities of these
two strategies coincide. We formally define and study periodic
strategies in two player perfect information strategic form games
with pure strategies and we prove that every non-trivial finite game
has at least one periodic strategy, with non-trivial meaning
non-degenerate payoffs.  In some classes of games   where mixed
strategies are used, we identify quantitative features. Particularly
interesting are the implications for  collective action games, since
there  the collective action strategy can be incorporated in a
purely non-cooperative context. Moreover, we address the periodicity
issue when the players have a continuum set of strategies available.
\end{abstract}

\section{Motivation for Periodicity and Periodic Strategies--A non-cooperative Concept}

We introduce a new concept in game theory that  is an inherent
characteristic of every non-trivial, finite action-player,
simultaneous, strategic form game with or without perfect
information. Here,  non-trivial means  non-degenerate payoffs for
the players. Should a game have degenerate pay-offs, it can be
perturbed to one with non-degenerate payoffs.

We shall call this new mathematical concept ``periodicity'' and we
shall describe  both the mathematical implications and its
applications  in specific games.

For this  periodicity concept,  non-cooperativity plays an essential
role. The key aspect is that each player tries to \textit{maximize
his own payoff}, by observing and predicting which action of his
opponent will make his payoff maximized. The situation is
non-cooperative as each player tries to maximize his own payoff, but
there is an important difference to standard game theory. For
finding periodic strategies, each player  ''scans'' his opponent's
actions, builds hierarchical belief systems on these actions, by
assigning corresponding probabilities and investigates which of his
opponent's strategies will maximize his own payoffs. While this is
different from standard game theory, we shall see that it is
self-consistent. Moreover, it will result in payoffs that are at
least as high as those reached by Nash type strategies.
\medskip

Here is an example, which we will analyze in detail in a later
section. Consider the game named ''Test Game'' appearing in the
table below, which is a two player strategic form game, played
simultaneously.
\begin{table}[h]\notag
\centering
  \begin{tabular}{| l |l |l | }
    \hline
       &  $b_1$ & $b_2$ \\ \hline
  $a_1$ & 2,5 & 50,6   \\ \hline
    $a_2$ & 3,10 & 2,5\\
    \hline
 \end{tabular}
\caption{Test Game} \notag
\end{table}
Player A then can use mixed strategies of the form
\begin{equation}\label{mixain1}
x_{\sigma}=pa_1+(1-p)a_2
\end{equation}
and correspondingly for  B:
\begin{equation}\label{mixbin2}
y_{\sigma}=qb_1+(1-q)b_2.
\end{equation}
The utilities of the players are
\begin{equation}\label{uti}
{\mathcal{U}}_A(p,q)=2pq +50p(1-q) +3(1-p)q + 2(1-p)(1-q),
{\mathcal{U}}_B(p,q)=5pq +6p(1-q) +10(1-p)q + 5(1-p)(1-q).
\end{equation}
This game has two pure and one mixed Nash equilibrium; the latter is
given by $(p_N^*=\frac{5}{6},q_N^*=\frac{48}{49})$. When B plays the
Nash value $q_N^*$, A's utility
\begin{equation}\label{mixjfgj1b}
{\mathcal{U}}_A(p,q_N^*=48/49)=\frac{146}{49}
\end{equation}
is independent of his own strategy $p$, and likewise, when A plays
$p_N^*$, B's utility
\begin{equation}\label{mikjhx12}
{\mathcal{U}}_B(p_N^*=5/6,q)=\frac{35}{6}
\end{equation}
is independent of $q$. We now observe that there are values
$(p_p^*=1/49,q_p^*=1/6)$ (with the subscript $p$ standing for
``periodicity''), with the property that when $A$ plays $p_p^*$,
then his utility
\begin{equation}\label{mixjfgj1a}
{\mathcal{U}}_A(p_p^*=1/49,q)=\frac{146}{49}
\end{equation}
now is independent of the opponent's value $q$. Similarly, B's
${\mathcal{U}}_B(p,q_p^*=1/6)=\frac{35}{6}$ is independent of what A
does. We note that the utilities \eqref{mixjfgj1b} and
\eqref{mixjfgj1a} agree, and the same holds for the utilities of B.

Two insights emerge from this example. First, players can seek
equilibria or stationary values of  their utility consistently
w.r.t. either variable, the own action or that of the opponent. In
one case, we obtain the Nash equilibrium, where it does not matter
what one plays as long as the opponent sticks to his action. In the
other case, we obtain an equilibrium where it does not matter what
the opponent does as long as a player sticks to his own action. The
latter seems better than the former. In terms of market economics,
this could be of some importance, since each player who adopts some
mixed strategy, by choosing to play such a strategy can earn the
optimal payoff (equal to Nash), without depending on the opponents
actions. As we shall analyze in more detail, this depends on
optimizing not w.r.t. the own action, but w.r.t. that of the
opponent. This may seem strange, like wishful thinking, but as we
shall see, when both players consistently do that, they will do at
least as well as when playing Nash. In fact, typically, as in this
example, the equilibrium utilities are the same.

More generally, we shall see that through such a process, cycles
emerge, as in the rationalizable strategies of Bernheim
\cite{bernheim} and Pearce \cite{pearce}. When trying to  compute
the Nash value, a player optimizes his own strategy for any action
of the opponent. When that the process is iterated, the players will
arrive at a cycle of rationalizable strategies, and that latter
class includes the Nash equilibria.  Conversely, as we shall see,
for arriving at the values $p_p^*,q_p^*$, each player asks for that
strategy of the opponent that is best for his current action. That
is, A optimizes not his own action, but that of his opponent B. When
B then takes that strategy as his starting point and computes that
response of A that is best for him, that is, B when playing that
strategy, we arrive at a new strategy of A, and the process can be
repeated and iterated.

The  fundamental concept in non-cooperative game theory
\cite{tirole,newref1,newref2,ozzy,geanakoplos,neu}, the Nash
equilibrium, is one of the most widely and commonly used solution
concepts that predict the outcome of a strategic interaction in the
social sciences. A pure-strategy Nash equilibrium is an action
profile with the important property that no single player can obtain
a higher payoff by deviating unilaterally from this strategy
profile. Based on the rationality of the players, a Nash strategy is
a steady state of strategic interaction. In a strategic form
2-player game with only finitely many actions, assumed to have
non-degenerate payoffs for simplicity, there may exist both pure and
mixed Nash equilibria. In a pure equilibrium, each player chooses
some definite action which is the best response to the opponent's
action, and conversely. Thus, no player can unilaterally change his
action without decreasing his payoff. In a mixed equilibrium,
instead, each player has a probability distribution for his actions
which again is optimal in view of the opponent's distribution. A
pure equilibrium can be considered as a limit case of a mixed one
where all probabilities are either 0 or 1. Mixed cases with
probabilities strictly between 0 and 1 can be computed with
differential calculus, by maximizing w.r.t. those probabilities,
because the payoffs depend differentiably on them (this is a
standard assumption that we shall also make implicitly).

However, as Bernheim notes in his paper \cite{bernheim}, the Nash
equilibrium is neither a necessary consequence of rationality nor a
reasonable empirical proposition. Despite the valuable contributions
that the Nash equilibrium offers to non-cooperative games, there is
a  refinement, the rationalizability solution concept
\cite{rat1,rat2,rat3,rat4,rat5,rat6,rat7,rat8,rat10,rat11,rat12,rat13,rat14,rat15,rat16,rat17}.
The idea of this concept is the following. An action of a player,
say A,  is rationalizable if it is the best response to some action
of the opponent B. In turn, B's action should be the best response
to some action of A, and so on. When we iterate this in a game with
only finitely many actions, then eventually, such rationalizable
actions will repeat themselves, and we obtain a periodic cycle. Our
periodicity concept looks similar, with the important difference,
that a player no longer computes his own best response to an
opponent's action, but rather seeks that opponent's action that is
best for him, given his current action. Again, the process can be
iterated, and we shall then arrive at periodic cycles.

In strategic form games, rationalizability is based on the fact that
each player views his opponent's choices as uncertain events, each
player complies to Savage's axioms of rationality and this fact is
common knowledge \cite{bernheim}. The rationalizability concept
appeared independently in Bernheim's \cite{bernheim} and Pearce's
work \cite{pearce} (a predecessor of the two papers was Myerson's
work \cite{myerson}). Subsequently, the rationalizability solution
concept has been analyzed and refined in various games, both static
and dynamic. For an important stream of papers see
\cite{rat1,rat2,rat3,rat4,rat5,rat6,rat7,rat8,rat10,rat11,rat12,rat13,rat14,rat15,rat16,rat17}
and references therein.

The Nash equilibrium and its refinements are statements about the
existence of a fixed point in every game. In this paper we shall
present another mathematical property of finite player, finite
actions, simultaneous strategic form games, which we shall call
periodicity. Periodicity is a solution concept with interesting
quantitative implications.

The purpose of this paper is to study periodic strategies and
investigate the consequences of periodicity in various cases of
perfect information strategic form games. The terms ``periodic'' and
``periodicity'' indicate that there exist self-maps $\mathcal{Q}$ of
the players' strategy spaces with $\mathcal{Q}^n=1$ for some $n$
$\in$ $N$. The rationalizable strategies of Bernheim and Pearce are
also periodic in that sense.  In our case, periodic strategies arise
from an optimization scheme that is different from that underlying
the rationalizable ones.

\section{Periodic Strategies in Strategic Form Games--Definitions--Pure Strategies Case }
\subsection{Introduction to the Periodicity Concept}
We restrict our present study to simultaneous, strategic form games
with two players A and B, in the context of perfect information,
assuming that the game is played only once and also that each player
has only finitely many actions available. We start with pure
strategies only, before including mixed actions in a later section.
The strategic form game is then defined by:
\begin{itemize}
\item The strategy spaces of
players $A,B$, denoted by  $\mathcal{M}(A)=\{a_1,a_2,a_3,...,a_N\}$
and $\mathcal{M}(B)=\{b_1,b_2,b_3,...,b_N\}$ (for simplicity, we
shall usually assume that $N=2$, i.e., that each player has only two
choices for his actions), and
\item the payoff functions
$\mathcal{U}_i:\mathcal{M}(A)\times \mathcal{M}(B)\rightarrow
\mathbb{R}$, $i={A,B}$. We shall assume, for simplicity again, that
they are non-degenerate in the sense that different actions yield
different pay-offs.
\end{itemize}
We then have the periodicity algorithm
\begin{itemize}
  \item Start from Player A and his first action $a(0)$. Seek that
    action $b(1)$
  available actions $b_1,b_2,b_3,...,b_N$ of B for which  the corresponding payoff of player A
  $\mathcal{U}_A(a(0),b(1))$ is maximized, that is, larger than the
  payoff $\mathcal{U}_A(a(0),b)$ for any other action $b\neq
  b(1)$. (Recall that we assume non-degeneracy of payoffs, so there is
  a unique such $b(1)$.)

  \item For the action $b(1)$ found in the previous step, now seek
    that action $a(1)$ that maximizes B's payoff, that is,
    $\mathcal{U}_B(a(1),b(1)) > \mathcal{U}_B(a,b(1))$ for any other
    $a\neq a(1)$.

  \item When iteratively actions $a(k),b(k)$ have been chosen, seek
    that $b(k+1)$ for which
    $\mathcal{U}_A(a(k),b(k+1))>\mathcal{U}_A(a(k),b)$ for any $b\neq
    b(k+1)$, and then that $a(k+1)$ for which  $\mathcal{U}_B(a(k+1),b(k+1)) > \mathcal{U}_B(a,b(k+1))$ for any other
    $a\neq a(k+1)$.

  \item Continue until $a(k)=a(\ell)$ or $b(k)=b(\ell)$ for some $\ell
    <k$.
\end{itemize}
Since at each step, by non-degeneracy, the selected action of the
opponent is unique, the procedure will then repeat itself, that is,
become periodic. Hence the name ``periodic solution''.

\subsection{Two Player, Perfect Information Strategic Form Games}

Formalizing the preceding, we define two continuous maps between the
strategy spaces,
\begin{equation}\label{phi1}
\varphi_1:\mathcal{M}(A)\rightarrow
\mathcal{M}(B){\,}{\,}{\,}{\,}{\,}{\,}{\,}\varphi_2:\mathcal{M}(B)\rightarrow
\mathcal{M}(A).
\end{equation}
They are defined in terms of  the payoff functions by the following
inequalities
\begin{align}\label{payoffreq}
&\mathcal{U}_A(x,\varphi_1(x))> \mathcal{U}_A(x,y_1)
{\,}{\,}{\,}{\,}{\,}{\,}& \forall {\,}y_1{\,}\in
{\,}\mathcal{M}(B)\backslash\{\varphi_1(x)\}\\ \notag
&\mathcal{U}_B(\varphi_2(y),y)> \mathcal{U}_B(x_1,y) {\,}& \forall
{\,}x_1{\,}\in {\,}\mathcal{M}(A)\backslash\{\varphi_2 (y)\}.
\end{align}
We can achieve the strict inequalities here because we assume that
the pay-off tables are nondegenerate. Iteratively, we obtain for any
positive integer $k$
\begin{align}\label{payoffreq1}
\notag &\mathcal{U}_B((\varphi_2 \circ
\varphi_1)^k(x),\varphi_1\circ (\varphi_2 \circ
\varphi_1)^{k-1}(x))> \mathcal{U}_B(x_1,\varphi_1\circ (\varphi_2
\circ \varphi_1)^{k-1}(x))
\\ \notag & \forall {\,}{\,}x_1{\,} \in {\,}\mathcal{M}(A)\backslash\{(\varphi_2 \circ \varphi_1)^k(x)\}
\\
&\mathcal{U}_A((\varphi_2 \circ \varphi_1)^k(x),\varphi_1\circ
(\varphi_2 \circ \varphi_1)^{k}(x))> \mathcal{U}_A((\varphi_2 \circ
\varphi_1)^kx,y_1)
\\ \notag & \forall {\,}{\,}y_1{\,}\in {\,}\mathcal{M}(B)\backslash\{\varphi_1\circ (\varphi_2 \circ \varphi_1)^{k}(x)\}.
\end{align}
Thus, in each step, a player seeks the opponent's action that
maximizes his own pay-off given his current action. When there are
only finitely many strategies available, as we are currently
assuming, then necessarily the strategies that occur in this chain
will repeat themselves after finitely many steps. That is, after
finitely many steps, the players turn into a periodic cycle that can
be represented  by the diagram
\begin{equation}\label{periodicitysimplex}
x{\,}{\,}{\,}\xrightarrow{P}{\,}{\,}{\,}
\varphi_1(x){\,}{\,}{\,}\xrightarrow{P}{\,}{\,}{\,} \varphi_2\circ
\varphi_1 (x){\,}{\,}{\,}\xrightarrow{P} \cdots
\xrightarrow{P}{\,}{\,}{\,} x.
\end{equation}
 We shall call such an  action $x$ that repeats itself after finitely many steps periodic. The minimal number $n(x)$ of such steps is called the periodicity number of $x$. We denote the set of periodic actions of
player $A$ by $\mathcal{P}(A)$ and those of  player B by
$\mathcal{P}(B)$. For periodic actions, the operator
\begin{equation}\label{oper1}
\mathcal{Q}=\varphi_2 \circ \varphi_1
\end{equation}
satisfies  $\mathcal{Q}^nx=x$. In terms of the operator
$\mathcal{Q}$, the  inequalities  (\ref{payoffreq1}) become
\begin{equation}\label{lastineq}
\mathcal{U}_A({\mathcal{Q}}^k(x),\varphi_1\circ
{\mathcal{Q}}^{k}(x))> \mathcal{U}_A({\mathcal{Q}}^{k}(x),y_1)
\forall {\,}{\,}y_1{\,}\in
{\,}\mathcal{M}(B)\backslash\{\varphi_1\circ (\varphi_2 \circ
\varphi_1)^{k}(x)\}
\end{equation}
Likewise, for  player B, we have the operator
\begin{equation}\label{oper2}
\mathcal{Q}'=\varphi_1 \circ \varphi_2
\end{equation}
In this case,  the inequalities (\ref{payoffreq1}) take the form:
\begin{align}\label{paytheplayer}
&\mathcal{U}_A((\varphi_1 \circ \varphi_2)^k(y),\varphi_2\circ
(\varphi_1 \circ \varphi_2)^{k-1}(y))>
\mathcal{U}_A(y_1,\varphi_2\circ (\varphi_1 \circ
\varphi_2)^{k-1}(y)) \\ \notag & \forall {\,}{\,}y_1{\,}\in
{\,}\mathcal{M}(B)\backslash\{\varphi_1 \circ \varphi_2)^k(y)\}
\\ \notag
&\mathcal{U}_B((\varphi_1 \circ \varphi_2)^k(y),\varphi_2\circ
(\varphi_1 \circ \varphi_2)^{k}(y))> \mathcal{U}_B((\varphi_1 \circ
\varphi_2)^{k}(y),x_1) \\ \notag &\forall {\,}{\,}x_1{\,}\in {\,}
\mathcal{M}(A)\backslash\{\varphi_2\circ (\varphi_1 \circ
\varphi_2)^{k}(y)\}
\end{align}
The last inequality can be written in terms of the operator
$\mathcal{Q}'$, as
\begin{equation}\label{lastineq}
\mathcal{U}_B({\mathcal{Q}'}^k(y),\varphi_2\circ
{\mathcal{Q}'}^{k}(y))>
\mathcal{U}_B({\mathcal{Q}'}^{k}(y),x_1)\quad \forall
{\,}{\,}x_1{\,}\in {\,}\mathcal{M}(A)\backslash\{\varphi_2\circ
{\mathcal{Q}'}^{k}(y)\}
\end{equation}
While the  periodic strategies in our above sense and the
rationalizable strategies are derived from different optimization
schemes, those  rationalizable strategies that are also periodic are
particularly interesting. We will analyze  some characteristic
examples  at the end of this section. Note that the procedure
described by relations (\ref{payoffreq}) and (\ref{paytheplayer})
does not stipulate that the maps $\varphi_1$ and $\varphi_2$ are
best responses to some action.
 Take for example the first of the inequalities
(\ref{payoffreq}). It means  that, by assuming that player A plays
$x$, we seek in the action set of player B for an action
$\varphi_1(x)$, for which the utility of player A,
$\mathcal{U}_A(.,.)$ is maximized. This is exactly the converse of
the procedure followed when best responses are studied. Indeed, in
the best response algorithm we don't presuppose that player A will
play some action, but we ask, given that player B will play an
action, say $b_k$, which action of player A maximizes his utility
function $\mathcal{U}_A(.,.)$.

Let us briefly recapitulate what we just described. In the best
response algorithm, we are searching player's A set of actions but
in the periodic actions algorithm described by the inequalities
(\ref{payoffreq}), we search the set of players B action, given that
A plays a specific action.

We also note that the periodicity number $n$ depends on the payoffs
rather than on the number of available actions.

\begin{theo}\label{1}
Every finite action simultaneous 2-player strategic form game
contains at least one periodic action.
\end{theo}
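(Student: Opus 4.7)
The plan is to exploit the fact that the maps $\varphi_1,\varphi_2$ are well-defined single-valued maps between finite sets, which reduces the existence of a periodic action to a pigeonhole argument on iterates of $\mathcal{Q}=\varphi_2\circ\varphi_1$.

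First I would verify that, under the non-degeneracy assumption stated in the motivation section, $\varphi_1$ and $\varphi_2$ are genuine functions and not merely correspondences. Non-degenerate payoffs mean that for each fixed $x\in\mathcal{M}(A)$ the values $\{\mathcal{U}_1(x,y):y\in\mathcal{N}(B)\}$ are pairwise distinct, so the maximizer $\varphi_1(x)\in\mathcal{N}(B)$ is unique and the strict inequality $\mathcal{U}_1(x,\varphi_1(x))>\mathcal{U}_1(x,y_1)$ for all $y_1\neq\varphi_1(x)$ holds by construction. Symmetrically $\varphi_2(y)$ is uniquely defined. Hence $\mathcal{Q}=\varphi_2\circ\varphi_1:\mathcal{M}(A)\to\mathcal{M}(A)$ is a well-defined self-map of a finite set.

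Next, I would pick an arbitrary $x_0\in\mathcal{M}(A)$ and consider the orbit
\begin{equation}
x_0,\ \mathcal{Q}(x_0),\ \mathcal{Q}^2(x_0),\ \ldots,\ \mathcal{Q}^{|\mathcal{M}(A)|}(x_0).
\end{equation}
Since $\mathcal{M}(A)$ is finite, by the pigeonhole principle there exist indices $0\le i<j\le|\mathcal{M}(A)|$ with $\mathcal{Q}^i(x_0)=\mathcal{Q}^j(x_0)$. Setting $x^{*}=\mathcal{Q}^i(x_0)$ and $n=j-i$ then yields $\mathcal{Q}^n(x^{*})=x^{*}$, i.e.\ $x^{*}$ lies on a cycle of $\mathcal{Q}$. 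By the very construction of $\varphi_1,\varphi_2$, each step of the chain $x^{*}\xrightarrow{\varphi_1}\varphi_1(x^{*})\xrightarrow{\varphi_2}\mathcal{Q}(x^{*})\xrightarrow{\varphi_1}\cdots\xrightarrow{\varphi_2}\mathcal{Q}^n(x^{*})=x^{*}$ satisfies the strict payoff inequalities in (\ref{payoffreq}), so $x^{*}$ meets the definition of a periodic action. This shows $\mathcal{P}(A)\neq\emptyset$, and applying $\varphi_1$ to $x^{*}$ produces an element of $\mathcal{P}(B)$, so the conclusion is symmetric.

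The only potentially delicate step is the non-degeneracy premise: if some player is indifferent between two best responses to a given action of the opponent, $\varphi_1$ or $\varphi_2$ becomes multi-valued and the argument above needs a selection. I would note that the hypothesis of non-trivial payoffs in the statement is precisely what rules this out, and that in the degenerate case one can still obtain a periodic action by choosing any measurable selection from the best-response correspondence and rerunning the pigeonhole argument on the resulting $\mathcal{Q}$, at the cost of losing uniqueness of the periodic orbit. Beyond this selection issue, the proof is essentially a finite-orbit argument and requires no further machinery.
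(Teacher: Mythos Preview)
Your proposal is correct and follows essentially the same approach as the paper: iterate $\mathcal{Q}=\varphi_2\circ\varphi_1$ on the finite set $\mathcal{M}(A)$ and use finiteness (pigeonhole) to conclude that the orbit eventually enters a cycle, whose elements are periodic. Your write-up is in fact more careful than the paper's, since you explicitly justify why $\varphi_1,\varphi_2$ are single-valued under the non-degeneracy hypothesis and you spell out the pigeonhole step with indices $i<j$, whereas the paper simply asserts that ``since the game contains a finite number of actions, there will be an action $x_a$ for which $\mathcal{Q}^n x_a=x_a$.''
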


\begin{proof}
Since there are only finitely many actions, starting with any
  action $x_*$ of A and iteratively applying the operator
$\mathcal{Q}$ will eventually lead to some $x$ that has already
occurred before in the chain, that is, $\mathcal{Q}^n(x)=x$. But
then, because $\mathcal{Q}$ is defined by the inequalities
(\ref{payoffreq}), $\mathcal{Q}^{n+k}(x)=\mathcal{Q}^kx$ for every
positive integer $k$, and $x$ is periodic.
\end{proof}

This reasoning  reveals another property of the set of periodic
actions in finite action games.  We modify the definition of set
stability from Bernheim \cite{bernheim} as follows:

\begin{defi} [Set Stability]
Consider an automorphism $\mathcal{Q}:\mathcal{M}(A)\rightarrow
\mathcal{M}(A)$. In addition, let $A\subseteq A\cup B \subseteq
\mathcal{M}(A)$, with $A\cap B= \varnothing$. The set $A$ is set
stable under the action of the map $\mathcal{Q}$ if, for any initial
$x_0$ $\in$ $A\cup B$ and any sequence $x_k$ formed by taking
$x_{k+1}$ $\in$ $\mathcal{Q}(x_k)$, there exists $x_K$ $\in$ $A\cup
B$ such that $d(x_K,x^1)<\epsilon$, with $x^1$ $\in$ $A$. For finite
sets, this implies that any sequence formed by the act of the
operator $\mathcal{Q}$ on elements produces an element $x_k$ for any
initial $x_0$, with $x_k$ belonging to the set stable set $A$. An
analogous definition applies  for the set of actions of player B and
the operator $\mathcal{Q}':\mathcal{M}(B)\rightarrow
\mathcal{M}(B)$.
\end{defi}

\begin{theo}\label{2}

Let $\mathcal{P}(A)$ and $\mathcal{P}(B)$ denote the set of periodic
strategies for players A and B. The sets $\mathcal{P}(A)$ and
$\mathcal{P}(B)$ are set stable, under the action of the maps
$\mathcal{Q}$ and $\mathcal{Q}'$ respectively.

\end{theo}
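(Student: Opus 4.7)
The plan is to recycle the pigeonhole argument used for Theorem~\ref{1}, but now applied to an \emph{arbitrary} starting point: I would show that every forward orbit of $\mathcal{Q}$ on $\mathcal{M}(A)$ eventually falls into $\mathcal{P}(A)$. For finite strategy spaces, the definition of set stability (taking $A=\mathcal{P}(A)$ and $A\cup B=\mathcal{M}(A)$) collapses to exactly this statement: for every $x_0\in\mathcal{M}(A)$ some iterate $\mathcal{Q}^{K}(x_0)$ must lie in $\mathcal{P}(A)$, since then one may trivially pick $x^1:=x_K$ and obtain $d(x_K,x^1)=0<\varepsilon$ for every $\varepsilon>0$.

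First I would fix an arbitrary $x_0\in\mathcal{M}(A)$ and form the forward orbit $x_0,x_1=\mathcal{Q}(x_0),x_2=\mathcal{Q}^2(x_0),\ldots$ Under the paper's standing non-degeneracy hypothesis, the argmax in each step of (\ref{payoffreq}) is unique, hence $\varphi_1,\varphi_2$ and therefore $\mathcal{Q}$ are single-valued, so the orbit is well defined. Because $\mathcal{M}(A)$ is finite, the pigeonhole principle gives indices $i<j$ with $x_i=x_j$. Setting $n:=j-i$, one has $\mathcal{Q}^{n}(x_i)=x_i$, and along the cycle $x_i\to x_{i+1}\to\cdots\to x_j=x_i$ the inequalities (\ref{payoffreq}) already hold by the very construction of $\mathcal{Q}$. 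Therefore $x_i\in\mathcal{P}(A)$ by Definition~1, and taking $x_K:=x_i$ exhibits the element required by the set-stability definition.

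The symmetric statement for $\mathcal{P}(B)$ under $\mathcal{Q}'=\varphi_1\circ\varphi_2$ is obtained by exchanging the roles of the players: the same pigeonhole argument applied on the finite set $\mathcal{N}(B)$, using the inequalities (\ref{paytheplayer}), yields an iterate of $\mathcal{Q}'$ that lies in $\mathcal{P}(B)$. I would simply note this duality rather than rewrite the argument.

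The only delicate point, and the one I would be most careful about, is the implicit single-valuedness of $\mathcal{Q}$. If the game admitted ties in the payoffs, $\varphi_1$ or $\varphi_2$ would become correspondences, different admissible selections could steer the same starting point into genuinely different cycles, and the statement would need to be recast in terms of forward-invariant sets of a correspondence. The non-degeneracy assumption that underlies the entire paper is precisely what excludes this pathology, and once it is in force the pigeonhole argument above closes the proof.
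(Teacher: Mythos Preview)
Your proposal is correct and follows essentially the same approach as the paper, which simply states that the proof is contained in the proof of Theorem~\ref{1} and omits the details. You have in fact spelled out that pigeonhole argument more carefully than the paper does, including the explicit identification of $x_K$ with the first repeated element and the remark on why non-degeneracy is needed for $\mathcal{Q}$ to be single-valued; nothing further is required.
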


\noindent The theorem says that for any  non-periodic action $x_0$,
we eventually arrive  in the periodicity cycle of some action $x_K$,
that is:
\begin{equation}\label{periodicitysimplex}
x_0{\,}{\,}{\,}\xrightarrow{P}{\,}{\,}{\,}
\mathcal{Q}(x_0){\,}{\,}{\,}\xrightarrow{P}{\,}{\,}{\,}
\mathcal{Q}^2(x_0){\,}{\,}{\,}\xrightarrow{P} \cdots
x_K\xrightarrow{P}{\,}{\,}{\,} \mathcal{Q}x_K \xrightarrow{P} \cdots
\xrightarrow{P}\mathcal{Q}^{n-1} x_K{\,}{\,}{\,}\xrightarrow{P} x_K
\end{equation}

\begin{proof} The proof of this theorem is contained in the proof of Theorem 1, so
we omit it.
\end{proof}

Finally, let us present two versions of the algorithm for finding
periodic strategies. Here, we do not assume that the payoff table is
non-degenerate. The first compact form of the algorithm is the
following:
\begin{algo}Consider a finite action simultaneous 2-player strategic
  form game. We find periodic solutions according to the following algorithm.\\
  Start with any actions $\xi_0 \in \mathcal{M}(A),\eta_0 \in \mathcal{M}(B)$ of A and B.\\
  Given
  $\xi_k,\eta_k$ for $k\ge 0$, determine $\eta_{k+1}\in \mathcal{M}(B)$ such that
  \bel{algo1}
  \mathcal{U}_A(\xi_k,\eta_{k+1})\ge  \mathcal{U}_A(\xi_k,y) \quad
  \forall y\in \mathcal{M}(B).
  \qe
  Given
  $\xi_k,\eta_{k+1}$, determine $\xi_{k+1}\in \mathcal{M}(A)$ such that
  \bel{algo2}
  \mathcal{U}_B(\xi_{k+1},\eta_{k+1})\ge  \mathcal{U}_B(x,\eta_{k+1}) \quad
  \forall x\in \mathcal{M}(A).
  \qe
  Repeat with $k+1$ in place of $k$.\\
  Stop with step $n$ when $\xi_n=\xi_\ell$ or $\eta_n=\eta_\ell$ for some $\ell
  <n$.
  \end{algo}

Here is a more detailed description of the algorithm: 
\begin{itemize}
  \item Start from Player A and his first action $a_1$. Seek in the
  strategy space of B, that is in the discrete set of all the
  available actions of B, namely $b_1,b_2,b_3,...,b_N$, and find the
  action $b_j$ for which, the corresponding payoff of player A
  $\mathcal{U}_A(a_1,b_j)$ is maximized, that is, the payoff $\mathcal{U}_A(a_1,b_j)$ is
  the largest among all payoffs $\mathcal{U}_A(a_1,b_i)$, with
  $i=1,2,3,...j-1,...j+1,...N$. So with the procedure we just
  described we have the chain of actions $a_1\to b_j$ for the
  moment. Note that this looks like a map from the strategy space of
  player A to the strategy space of player B, and in fact, this describes the formal  iteration step underlying 
  periodicity.

  \item For the action $b_j$ found in the previous step, now seek
  that action of player A, among the available actions 
  $a_1,a_2,a_3,...,a_N$, for which the payoff $\mathcal{U}_B(a_i,b_j)$  of player B for the action
  $b_j$ is maximized. Suppose this occurs for
  the action $a_k$. So the chain of actions now becomes $a_1\to b_j\to
  a_k$.

  \item For the action $a_k$ found in the previous step, now seek
  that  action of player B, among the available   actions
  $b_1,b_2,b_3,...,b_N$, for which the payoff $\mathcal{U}_A(a_k,b_i)$  of player A for the action
  $a_k$ is maximized. Suppose this occurs for
  the action $b_m$. So the chain of actions now becomes $a_1\to b_j\to
  a_k\to b_m$.

  \item ....

  \item ....

  \item The above procedure is repeated, alternating between A and B.  Once one of the previous selected actions reoccurs, say $a_0$, then this action $a_0$ is characterized as a periodic action,
  and the final chain of actions is $a_0\to b_j\to
  a_k\to b_m\to....\to a_1$ (with $b_j$ now being the response to $a_0$, etc.). The corresponding
  actions of player B then are also periodic.

  \item Once such a periodic action and the correponding chain have been found,  the algorithm comes to an end.

  \item The same procedure can be adopted for all the actions of
  player A and B.

\end{itemize}

\subsection{Periodic Nash and Rationalizable Strategies}

We are interested in  those Nash strategies that are at the same
time periodic actions. Suppose that the strategy set $(x^*,y^*)$
constitutes one of the Nash equilibria of a two player finite action
simultaneous move game. Then the actions $(x^*,y^*)$ are mutually
best responses for the two players. In order for a Nash strategy to
be a periodic strategy, the following conditions must hold true.

\begin{theo}\label{3}

In a 2-player finite action, simultaneous, strategic form game, a
Nash strategy $(x^*,y^*)$ of a game is periodic if
\begin{align}\label{nashconst}
&\varphi_1(x^*)=y^*\\ \notag & \varphi_2 (y^*)=x^*
\end{align}
with $\varphi_1$,$\varphi_2$ defined in such a way that the
inequalities (\ref{payoffreq}), (\ref{paytheplayer}) hold true. In
addition, the periodicity number for each action is equal to one,
that is $n=1$ and,
\begin{align}\label{nashconst1}
&\mathcal{Q}(x^*)=x^*\\ \notag & \mathcal{Q}'(y^*)=y^*
\end{align}

\end{theo}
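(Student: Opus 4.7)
The plan is essentially to observe that the two hypotheses $\varphi_1(x^*) = y^*$ and $\varphi_2(y^*) = x^*$ are precisely what is needed to make $x^*$ a fixed point of $\mathcal{Q}$ and $y^*$ a fixed point of $\mathcal{Q}'$, and then to verify that all the inequalities (\ref{payoffreq}) and (\ref{paytheplayer}) collapse to conditions already encoded in the defining properties of $\varphi_1$ and $\varphi_2$.

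First I would do the direct computation: $\mathcal{Q}(x^*) = (\varphi_2 \circ \varphi_1)(x^*) = \varphi_2(y^*) = x^*$ by the two hypotheses, and symmetrically $\mathcal{Q}'(y^*) = (\varphi_1 \circ \varphi_2)(y^*) = \varphi_1(x^*) = y^*$. This immediately gives the fixed-point relations (\ref{nashconst1}) and shows that the periodicity number for each of these actions is $n=1$ (it is the smallest positive integer $n$ for which $\mathcal{Q}^n x^* = x^*$).

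Next I would verify that the full chain of inequalities in (\ref{payoffreq}), applied starting from $x=x^*$, is satisfied. Because $\mathcal{Q}^k(x^*) = x^*$ and $\varphi_1 \circ \mathcal{Q}^k(x^*) = \varphi_1(x^*) = y^*$ for every $k \geq 0$, every inequality in the chain reduces to one of just two forms: either $\mathcal{U}_1(x^*,y^*) > \mathcal{U}_1(x^*,y_1)$ for all $y_1 \neq y^*$, which holds by the defining property of $\varphi_1$ applied at $x^*$, or $\mathcal{U}_2(x^*,y^*) > \mathcal{U}_2(x_1,y^*)$ for all $x_1 \neq x^*$, which holds by the defining property of $\varphi_2$ applied at $y^*$ together with the hypothesis $\varphi_2(y^*) = x^*$. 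The analogous verification starting from $y=y^*$ in (\ref{paytheplayer}) is identical in structure. This establishes that $x^* \in \mathcal{P}(A)$ and $y^* \in \mathcal{P}(B)$ in the sense of the periodicity definition.

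The main conceptual point to flag, rather than a technical obstacle, is that the hypotheses $\varphi_1(x^*)=y^*$ and $\varphi_2(y^*)=x^*$ are \emph{not} automatic consequences of $(x^*,y^*)$ being a Nash equilibrium: the Nash conditions say $y^*$ maximizes $\mathcal{U}_2(x^*,\cdot)$ and $x^*$ maximizes $\mathcal{U}_1(\cdot,y^*)$, whereas the hypotheses require $y^*$ to maximize $\mathcal{U}_1(x^*,\cdot)$ and $x^*$ to maximize $\mathcal{U}_2(\cdot,y^*)$. So the theorem really characterizes when these two a priori independent maximization properties align at the Nash profile. One minor technical remark I would make is that the strictness of the inequalities in (\ref{payoffreq}) is guaranteed by the non-degeneracy assumption on the payoffs stated in the introduction, which ensures that the maximizers defining $\varphi_1(x^*)$ and $\varphi_2(y^*)$ are unique.
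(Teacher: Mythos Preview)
Your proposal is correct and follows essentially the same approach as the paper: both arguments simply compose the two hypotheses to obtain $\mathcal{Q}(x^*)=\varphi_2(\varphi_1(x^*))=\varphi_2(y^*)=x^*$ and the symmetric identity for $\mathcal{Q}'$, from which periodicity with $n=1$ is immediate. Your write-up is in fact somewhat more complete than the paper's, since you explicitly check that the full chain of inequalities (\ref{payoffreq}) collapses to the two defining maximization conditions for $\varphi_1$ and $\varphi_2$, and you correctly flag that the hypotheses are genuinely additional assumptions beyond the Nash property rather than consequences of it.
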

\begin{proof}

The proof of Theorem \ref{3} is simple, but we must bear in mind
that the maps $\varphi_{1,2}$ do not yield in general the best
response sets of the players involved in a game. Suppose that for
the Nash strategy $(x^*,y^*)$, the relations (\ref{nashconst}) hold
true. Acting on the first with the map $\varphi_2$, and with
$\varphi_1$ on the second relation, we obtain the relations:
\begin{align}\label{nashconst2}
&\varphi_2\circ \varphi_1(x^*)=\varphi_2 (y^*)\\ \notag &
\varphi_1\circ \varphi_2 (y^*)=\varphi_1 (x^*)
\end{align}
Using relations (\ref{nashconst}),  the equations (\ref{nashconst2})
become:
\begin{align}\label{nashconst3}
&\phi_2\circ \phi_1(x^*)=x^*\\ \notag & \phi_1\circ \phi_2 (y^*)=y^*
\end{align}
Hence, the Nash actions $(x^*,y^*)$ are periodic. The relations
(\ref{nashconst3}) can be cast in terms of the operators
$\mathcal{Q}$ and $\mathcal{Q}'$ as
\begin{align}\label{nashconst4}
&\mathcal{Q}(x^*)=x^*\\ \notag & \mathcal{Q}' (y^*)=y^*
\end{align}
It is obvious that the periodicity number for the two actions is
$n=1$.
\end{proof}

Similarly, also rationalizable strategies can be periodic. This is
the case if  the rationalizability chain is identical to the
periodicity cycle. In particular, this is the case  if the
rationalizability chains of belief contain actions that satisfy at
every step the inequalities (\ref{payoffreq}) and
(\ref{paytheplayer}).

\subsection{Examples}
\subsubsection{Games with and without Periodic Nash Equilibria-Four Choices two Player Games}

We start  with Game 1A  in Table \ref{game1a}, which is an analogue
of one of the games in  \cite{bernheim}. We shall focus on the
choices of player A , but similar results hold for  B's actions.
Using the algorithm that the inequalities of relation
(\ref{payoffreq}) dictate, we can construct the periodicity cycles
\begin{align}\label{g1perc1}
&a_1{\,}{\,}{\,}\xrightarrow{P}{\,}{\,}{\,}
b_3{\,}{\,}{\,}\xrightarrow{P}{\,}{\,}{\,}
a_3{\,}{\,}{\,}\xrightarrow{P}{\,}{\,}{\,}
b_1{\,}{\,}{\,}\xrightarrow{P}{\,}{\,}{\,} a_1 \\ \notag
&a_3{\,}{\,}{\,}\xrightarrow{P}{\,}{\,}{\,}
b_1{\,}{\,}{\,}\xrightarrow{P}{\,}{\,}{\,}
a_1{\,}{\,}{\,}\xrightarrow{P}{\,}{\,}{\,}
b_3{\,}{\,}{\,}\xrightarrow{P}{\,}{\,}{\,} a_3
\end{align}

\begin{table}[h]
\centering
  \begin{tabular}{| l |l |l | l |l |}
    \hline
       &  $b_1$ & $b_2$ & $b_3$& $b_4$ \\ \hline
  $a_1$ & 0,7 & 2,5 & 7,0& 0,1  \\ \hline
    $a_2$ & 5,2 & 3,3& 5,2& 0,1\\
    \hline
$a_3$ & 7,0 & 2,5& 0,7& 0,1\\
    \hline
$a_4$ & 0,0 & 0,-2& 0,0& 10,-1\\
    \hline
 \end{tabular}
\caption{Game 1A} \label{game1a}
\end{table}
The periodicity number is $n=2$ for both actions, $a_1$ and $a_3$.
For the actions that constitute a Nash equilibrium, it is not
possible to construct such a cycle. Nevertheless, if we apply the
algorithm (\ref{payoffreq}), we obtain the following cycle:
\begin{align}\label{g1perc1}
a_2{\,}{\,}{\,}\xrightarrow{P}{\,} b_1{\,}\xrightarrow{P}{\,}
a_1{\,}\xrightarrow{P}{\,} b_3{\,}\xrightarrow{P}{\,}
a_3{\,}\xrightarrow{P}{\,} b_1{\,}\xrightarrow{P}{\,}
a_1{\,}{\,}{\,}
\end{align}
It is obvious that the cycle of the non-periodic Nash action $a_2$
merges into the periodic cycle of the periodic action $a_1$, as
predicted by Theorem \ref{2}. We next consider the rationalizability
cycles. The actions $a_1$ and $a_3$ are both rationalizable.
\begin{table}[h]
\centering
  \begin{tabular}{| l |l |l | l |l |}
    \hline
       &  $b_1$ & $b_2$ & $b_3$& $b_4$ \\ \hline
  $a_1$ & 0,7 & 2,5 & 7,0& 0,1  \\ \hline
    $a_2$ & 5,2 & 7,7& 5,2& 0,1\\
    \hline
$a_3$ & 7,0 & 2,5& 0,7& 0,1\\
    \hline
$a_4$ & 0,0 & 0,-2& 0,0& 10,-1\\
    \hline
 \end{tabular}
\caption{Game 1B} \label{game1b}
\end{table}
 and
periodic, and  the rationalizability cycles  coincide with the
periodicity cycles. Here, a rationalizability cycle is  a cycle
based on rationality, that is,  acting optimally under some beliefs
about the opponents actions. Indeed, such a cycle  looks like:
\begin{align}\label{ccfgbfg123}
&a_1{\,}{\,}{\,}\xrightarrow{R}{\,}{\,}{\,}
b_3{\,}{\,}{\,}\xrightarrow{R}{\,}{\,}{\,}
a_3{\,}{\,}{\,}\xrightarrow{R}{\,}{\,}{\,}
b_1{\,}{\,}{\,}\xrightarrow{R}{\,}{\,}{\,} a_1\\ \notag &
a_3{\,}{\,}{\,}\xrightarrow{R}{\,}{\,}{\,}
b_1{\,}{\,}{\,}\xrightarrow{R}{\,}{\,}{\,}
a_1{\,}{\,}{\,}\xrightarrow{R}{\,}{\,}{\,}
b_3{\,}{\,}{\,}\xrightarrow{R}{\,}{\,}{\,} a_3
\end{align}
The reasoning behind this cycle is based on this system of beliefs:
Player A considers action $a_1$ rational if he believes that player
B will play $b_3$, which is rational for player B if he believes
that player A will play $a_3$. Accordingly, A will consider playing
$a_3$ rational if he believes that player B will play $b_1$, which
would be rational for player B if he believes that player A will
play $a_1$. Therefore, we obtain a cycle of rationalizability based
on pure utility maximization rationality.

The Nash action $a_2$  is not contained in  such a cycle due to the
fact that A will be forced to play $a_2$, and B would never play
$b_1$ or $b_3$ as a best response to $a_2$. So, the Nash strategy is
``forced`` to be rationalizable. In this game, the non-Nash
rationalizable actions are periodic actions which actually are the
only periodic strategies and also the rationality cycles and
periodicity cycles coincide.

\noindent We now slightly modify Game 1A and consider Game 1B
 in Table \ref{game1b}. The difference is that the Nash
equilibrium payoffs are changed. In this case, the periodicity
cycles of the actions $a_1$ and $a_3$ remain intact, but in this
case, the Nash action $a_2$ is also periodic, with periodicity
cycle:
\begin{align}\label{ccfgbfg1uhy23}
a_2{\,}{\,}{\,}\xrightarrow{R}{\,}{\,}{\,}
b_2{\,}{\,}{\,}\xrightarrow{R}{\,}{\,}{\,} a_2
\end{align}
Again, the periodicity and rationalizability cycles for the Nash
action $a_2$ coincide.

\subsubsection{$2\times 2$ Games }

Now we analyze $2\times 2$ simultaneous strategic form games.
Consider first Game 2  in Table \ref{game2}.
\begin{table}[h]
\centering
  \begin{tabular}{| l |l |l | }
    \hline
       &  $b_1$ & $b_2$ \\ \hline
  $a_1$ & 3,5 & 0,2   \\ \hline
    $a_2$ & 4,3 & 5,4\\
    \hline
 \end{tabular}
\caption{Game 2}\label{game2}
\end{table}
The Nash equilibrium consists of the actions $(a_2,b_2)$. Following
the reasoning of relation (\ref{payoffreq}), we can construct the
following periodicity cycles:
\begin{align}\label{g1perc}
&a_1{\,}{\,}{\,}\xrightarrow{P}{\,}{\,}{\,}
b_1{\,}{\,}{\,}\xrightarrow{P}{\,}{\,}{\,} a_1 \\ \notag &
a_2{\,}{\,}{\,}\xrightarrow{P}{\,}{\,}{\,}
b_2{\,}{\,}{\,}\xrightarrow{P}{\,}{\,}{\,} a_2
\end{align}
Obviously, all actions have a periodicity cycle and additionally all
the periodicity numbers are equal to one in this particular game.
Note that the actions that enter the Nash equilibrium are also
periodic. However, the action $a_1$ is strictly dominated by the
action $a_2$ for all cases, so it is not rationalizable. So we can
never construct a cycle based on rationality argument for this
action. Indeed, player A would never consider the action $a_1$ to be
a rational move, since it can never be a best response.

Nevertheless, we can construct a cycle based on rationality
arguments for the $a_2$ action. Indeed, player A would consider
$a_2$ to be a rational move if he believed that player B would play
$b_2$, which would be rational for player B if he believes that
player A plays $a_2$. According to this line of reasoning we can
construct the following rationalizability cycle,
 \begin{align}\label{rc}
&a_2{\,}{\,}{\,}\xrightarrow{R}{\,}{\,}{\,}
b_2{\,}{\,}{\,}\xrightarrow{R}{\,}{\,}{\,} a_2.
\end{align}
In this particular game, the set of periodic actions for player $A$
consists of both actions $a_1$ and $a_2$, that is
$\mathcal{P}(A)=\{a_1,a_2\}$, while the set of rationalizable
actions that are not Nash actions is empty. The set of Nash actions
consists of the action $\{a_2\}$. For this particular game, the Nash
equilibrium happens to be periodic.

In this game, the iterated elimination of dominated strategies
results in $(a_2,b_2)$ which is the Nash equilibrium. This class of
games describes competition between two firms that choose quantities
that they produce, knowing that the total quantity that is made
available in the market actually determines the price
\cite{gamesofstrategy}. In this game,  a periodic Nash equilibrium
is the only action that remains after the iterated elimination of
dominated strategies.

\subsubsection{Some Standard Games}

Before closing this section, we study the periodicity properties of
the players available actions for  the prisoner's dilemma game, the
battle of sexes game and finally the matching pennies game. Let us
start with the prisoner's dilemma game,  Game 3
 in Table \ref{pd},
\begin{table}[h]
\centering
  \begin{tabular}{| l |l |l | }
    \hline
       &  $b_1$ & $b_2$ \\ \hline
  $a_1$ & b,b & d,a   \\ \hline
    $a_2$ & a,d & c,c\\
    \hline
 \end{tabular}
\caption{Game 3, Prisoner's Dilemma} \label{pd}
\end{table}
with  $a<b<c<d$. The action $a_1$ is rationalizable but not
periodic. The action $a_2$ is periodic and the strategy $(a_2,b_2)$
contains periodic actions. Actually, the periodicity cycle in this
case is
\begin{equation}\label{perprisdlm}
a_2{\,}{\,}{\,}\xrightarrow{P}{\,}{\,}{\,}
b_2{\,}{\,}{\,}\xrightarrow{P}{\,}{\,}{\,}
a_1{\,}{\,}{\,}\xrightarrow{P}{\,}{\,}{\,}
b_1{\,}{\,}{\,}\xrightarrow{P}{\,}{\,}{\,} a_2
\end{equation}
Note that the periodicity number is $n=2$ in this case.

Let us continue with the Battle of Sexes,  Game 4  in Table
\ref{bs}.
\begin{table}[h]
\centering
  \begin{tabular}{| l |l |l | }
    \hline
       &  $b_1$ & $b_2$ \\ \hline
  $a_1$ & 2,1 & 0,0   \\ \hline
    $a_2$ & 0,0 & 1,2\\
    \hline
 \end{tabular}
\caption{Game 4, Battle of Sexes} \label{bs}
\end{table}
There are two Nash equilibria,  $(a_1,b_1)$ and $(a_2,b_2)$,  and
both actions $a_1$ and $a_2$ are periodic and rationalizable. There
are no non-Nash strategies that are rationalizable. In this game, we
always have $n=1$.

Finally, in the matching pennies game,  Game 5
 in Table \ref{pcg}.
\begin{table}[h]
\centering
  \begin{tabular}{| l |l |l | }
    \hline
       &  $b_1$ & $b_2$ \\ \hline
  $a_1$ & 1,-1 & -1,1   \\ \hline
    $a_2$ & -1,1 & 1,-1\\
    \hline
 \end{tabular}
\caption{Game 5, The Matching Pennies Game} \label{pcg}
\end{table}
we have  $n=2$ and the actions $a_1,a_2$ are both periodic and
rationalizable, that is, we can construct the following cycles:
\begin{align}\label{g1perc}
&a_1{\,}{\,}{\,}\xrightarrow{P}{\,}{\,}{\,}
b_1{\,}{\,}{\,}\xrightarrow{P}{\,}{\,}{\,} a_2
{\,}{\,}{\,}\xrightarrow{P}{\,}{\,}{\,}
b_2{\,}{\,}{\,}\xrightarrow{P}{\,}{\,}{\,} a_1\\ \notag &
a_2{\,}{\,}{\,}\xrightarrow{R}{\,}{\,}{\,}
b_2{\,}{\,}{\,}\xrightarrow{R}{\,}{\,}{\,} a_1
{\,}{\,}{\,}\xrightarrow{R}{\,}{\,}{\,}
b_1{\,}{\,}{\,}\xrightarrow{R}{\,}{\,}{\,} a_2
\end{align}
All the actions of both players are periodic and  rationalizable.
There is no pure strategy Nash equilibrium. This motivates to turn
to  mixed strategies. This will be the subject of the next section.

Before closing this section, let us briefly comment on the case of
extensive form games and periodicity. Since every perfect
information extensive form game has a strategic form game
representation, all arguments apply to extensive form games. The
difference is that the strategic form representation of an extensive
form game has many degeneracies, so we may have many periodicity
cycles corresponding to a specific action. Here, however, this will
not be analyzed further.

\section{Periodicity, Rationalizability and Mixed Strategies in Finite Action Simultaneous Strategic Form Games}

\subsection{Essential Features of Periodicity in the Case of Mixed Strategies: Introductory Remarks}

We shall now develop   the periodicity concept for  mixed strategies
for finite action simultaneous strategic form games (again for
2-player games only), and derive its consequences. For some classes
of well-known  games, the periodic mixed strategies yield the same
payoff as the Nash strategies do. Sometimes, they yield even higher
payoffs.  The important difference, or advantage of such periodic
strategies is that, in contrast to the situation for Nash
equilibria, the payoff of a player does not depend on the action of
the opponent. For Nash,  the underlying rationality assumption is
crucial. Each player is not only rational himself, but assumes that
the opponent is rational as well, in the sense that he adopts the
best response to the own action. An equilibrium where both act
rationally in that sense is a Nash equilibrium.  For the periodicity
concept, in contrast, one assumes  that the opponent chooses the
best action not for himself, but for oneself. As we shall see, this
is as self-consistent as the Nash rationality assumption.

Let us now present the essential idea of the algorithm for mixed
strategies, again for simultaneous mixed strategy, two-player,
strategic form games with perfect information. And for simplicity,
we grant each player two actions only.

A  general mixed strategy for the player A then is of the form
\begin{equation}\label{mixa}
x_{\sigma}=pa_1+(1-p)a_2,
\end{equation}
and one for B looks like
\begin{equation}\label{mixb}
y_{\sigma}=qb_1+(1-q)b_2.
\end{equation}
The crucial parameters here are the probabilities $0\leq p,q\leq 1$.
We assume that the payoffs depend differentiably on them. In fact,
in game theory, it is usually assumed that the payoff for a mixed
strategy is the convex combination of those for pure strategies,
that is,
\begin{equation}
  \label{mixpay}
  \mathcal{U}_i(p,q)=pq\mathcal{U}_i(a_1,b_1)+(1-p)q\mathcal{U}_i(a_2,b_1)+p(1-q)\mathcal{U}_i(a_1,b_2)+(1-p)(1-q)\mathcal{U}_i(a_2,b_2) \text{ for } i=A,B.
\end{equation}
The difference between periodic and Nash mixed strategies then
reduces to the fact that for Nash, a player, say A,  optimizes
w.r.t. his own action and therefore, $\mathcal{U}_A(p,q)$ has to be
differentiated w.r.t. to his own probability $p$ to find the
optimum. In contrast, for the periodic strategy, A will need to
differentiate $\mathcal{U}_A(p,q)$ w.r.t. the opponent's probability
$q$. Since $p$ and $q$ enter linearly in \eqref{mixpay}, when we
differentiate w.r.t. $p$ and put the resulting expression $=0$ to
find the extremum, the result will only depend on $q$ and not on
$p$. That is, for Nash, the result is independent of the own action
and only depends on the opponent's choice of $q$. In contrast, when
we differentiate w.r.t. $q$ and set the result $=0$, it will no
longer depend on the opponent's $q$, but only on the own $p$.

We can thus turn the procedure into algorithmic form by iteratively
performing such an optimization for the 2 players in turn. We work
with the set-up defined in the beginning of this section, in
particular the utility functions \eqref{mixpay}.

\begin{itemize}

 \item  Start with some value of $p$ for player A and put $\frac{\partial \mathcal{U}_A(p,q)}{\partial q}=0$. The solution  specifies a mixed strategy action
  $p_p^*$ (with the subscript $p$ standing for ``periodic''.)

 \item  Continue with some value $q$ for player B and put $\frac{\partial \mathcal{U}_B(p,q)}{\partial p}=0$, leading to a mixed strategy action
  $q_p^*$.
  \item Check for both players if the following conditions hold true:
\begin{align}\label{periodicpayoff}
&\frac{\partial {\mathcal{U}_A}(p,q)}{\partial q}\Big{\lvert
}_{p=p_p^*}=0 & \\ \notag &{\mathcal{U}_A}_{p_p^*,q}=
\mathrm{max}\Big{(}{\mathcal{U}_A}(p,q)\Big{)},{\,}{\,}\forall
{\,}{\,}p,q
\end{align}
\begin{align}\label{periodicpayoff}
&\frac{\partial {\mathcal{U}_B}(p,q)}{\partial q}\Big{\lvert
}_{q=q_p^*}=0 & \\ \notag &{\mathcal{U}_B}(p,q_p^*)=
\mathrm{max}\Big{(}{\mathcal{U}_B}(p,q)\Big{)},{\,}{\,}\forall
{\,}{\,}p,q
\end{align}

\item Check if $q_p^*$ is the only value of
$q$ that maximizes ${\mathcal{U}_A}(p_p^*,q)$ and also check if
$p_p^*$ is the only value of $p$ that maximizes
${\mathcal{U}_B}(p,q_p^*)$.

  \item If so, stop; otherwise return to the beginning.
\end{itemize}
When the algorithm stops, we have found $p_p^*$ and $q_p^*$ which
are mutually optimal. That means that $q_p^*$ is that action of B
that is optimal for A when he plays $p_p^*$, and conversely.

\subsection{Periodicity for Mixed Strategies}

We now look for  periodicity patterns in  $2\times 2$ games in the
context of mixed strategies. As in the pure strategy case, this
periodicity will be materialized by two maps $\Phi_1$, $\Phi_2$ that
constitute the automorphisms $\mathcal{Q}=\Phi_2\circ \Phi_1$ and
$\mathcal{Q}'=\Phi_1\circ \Phi_2$. Their definition will be
different from  the pure strategy case. Take for example player A:
The operator ${\mathcal{Q}}$ has the property that there exists a
positive integer $n$ and some action $x_{\sigma} \in \Delta
(\mathcal{M}(A))$,  for which $\mathcal{Q}^nx_{\sigma}=x_{\sigma}$.
The actions of the maps $\Phi_1$ and $\Phi_2$ are defined in the
mixed strategies case as follows,
 \begin{align}
 &\Phi_1: \Delta (\mathcal{M}(A))\rightarrow \Delta (\mathcal{M}(B))\\ \notag
  &\Phi_2: \Delta (\mathcal{M}(B))\rightarrow \Delta (\mathcal{M}(A))
 \end{align}
where  $\mathcal{M}(A)$ and $\mathcal{M}(B)$ are the available
strategies space of player A and B, and  $\Delta (\mathcal{M}(A))$
and $\Delta (\mathcal{M}(B))$ are the probability distributions
over the corresponding strategy spaces. These two maps $\Phi_1$ and
$\Phi_2$ are defined such that  we have for all $k\ge 1$
\begin{align}\label{payoffreqmixed}\notag
&\mathcal{U}_{B_{p,q}}((\Phi_2 \circ
\Phi_1)^k(x_{\sigma}),\Phi_1\circ (\Phi_2 \circ
\Phi_1)^{k-1}(x_{\sigma}))>
\mathcal{U}_{B_{p,q}}(x_{{\sigma}_1},\Phi_1\circ (\Phi_2 \circ
\Phi_1)^{k-1}(x_{\sigma}))
\\ \notag
&\forall {\,}{\,}x_{{\sigma}_1}{\,}\in {\,}\Delta
(\mathcal{M}(A)\backslash\{(\Phi_2 \circ \Phi_1)^k(x_{\sigma})\})
\\ \notag
&\mathcal{U}_{A_{p,q}}((\Phi_2 \circ
\Phi_1)^k(x_{\sigma}),\Phi_1\circ (\Phi_2 \circ
\Phi_1)^{k}(x_{\sigma}))> \mathcal{U}_{A_{p,q}}((\Phi_2 \circ
\Phi_1)^kx_{\sigma},y_{{\sigma}_1})
\\  & \forall
{\,}{\,}y_{{\sigma}_1}{\,}\in {\,}\Delta
(\mathcal{M}(B)\backslash\{\Phi_1\circ (\Phi_2 \circ
\Phi_1)^{k}(x_{\sigma})\})
\end{align}
when we consider player A.   The algorithm implied by the
inequalities (\ref{payoffreqmixed}) dictates that starting with a
mixed strategy of player A, namely $x_{\sigma}$, and upon which we
act with the map $\Phi_1$, we search in player's B set of
probability distributions $\Delta (\mathcal{M}(B))$, in order to
find which mixed strategy maximizes the expected utility of player
A. This then is iterated.  Accordingly, just like in the pure
strategy case, it is possible that the  process returns to the
initial mixed strategy, $x_{\sigma}$. In that case, there is  a
chain of mixed strategies of the following form:
\begin{equation}\label{periodicitysimplex}
x_{\sigma}{\,}{\,}{\,}\xrightarrow{P}{\,}{\,}{\,}
\Phi_1(x_{\sigma}){\,}{\,}{\,}\xrightarrow{P}{\,}{\,}{\,}
\Phi_2\circ \Phi_1 (x_{\sigma}){\,}{\,}{\,}\xrightarrow{P} \cdots
\xrightarrow{P}{\,}{\,}{\,} x_{\sigma}
\end{equation}
where as in the pure strategy case, the letter $P$ denotes the
procedure described in relation (\ref{payoffreqmixed}) above. A
mixed strategies for which we can find such a chain, is called
periodic, and as in the pure strategy case, such strategies  satisfy
\begin{equation}\label{mixrer}
\mathcal{Q}^nx_{\sigma}=x_{\sigma}
\end{equation}
The last inequality of  (\ref{payoffreqmixed}) then becomes
\begin{equation}\label{lastineq}
\mathcal{U}_{A_{p,q}}({\mathcal{Q}}^n(x_{\sigma}),\Phi_1\circ
{\mathcal{Q}}^{n}(x_{\sigma}))>
\mathcal{U}_{A_{p,q}}({\mathcal{Q}}^{n}(x_{\sigma}),y_{{\sigma}_1})
\forall {\,}{\,}y_{{\sigma}_1}{\,}\in
{\,}\Delta\Big{(}\mathcal{M}(B)\backslash\{\Phi_1\circ
{\mathcal{Q}}^{n}(x_{\sigma})\})\Big{)}
\end{equation}

Dealing  with mixed strategies provides the  advantage that we can
use differential calculus to identify the optimizers at each step.
The action of the map $\Phi_1$ on $x_{\sigma}$ is equivalent to the
maximization of $\mathcal{U}_{A_{p,q}}$ with respect to $q$. Indeed,
take for example the initial inequality of relation
(\ref{payoffreqmixed}). The map $\Phi_1$ yields a strategy $\in$
$\Delta (\mathrm{N}(B))$ by  which the expected utility of player A
is maximized. Hence, if we differentiate $\mathcal{U}_{A_{p,q}}$
with respect to $q$, the corresponding solution $p_p^*$ is equal to
$\Phi_1 (x_{\sigma})$,
\begin{equation}\label{maxcondition}
p_p^*=\Phi_1 (x_{\sigma})
\end{equation}
We shall use the property implied by relation (\ref{maxcondition})
in one of the next subsections to bring out interesting features in
some classes of games. The same considerations
 apply for player B. Thereby, the
corresponding inequalities (\ref{payoffreqmixed}) for a given
initial mixed strategy $y_{\sigma}$ $\in$ $\Delta (\mathcal{M}(A))$,
now become
\begin{align}\label{paytheplayermixed2}
&\mathcal{U}_{A_{p,q}}((\Phi_1 \circ
\Phi_2)^k(y_{\sigma}),\Phi_2\circ (\Phi_1 \circ
\Phi_2)^{k-1}(y_{\sigma}))>
\mathcal{U}_{A_{p,q}}(y_{{\sigma}_1},\Phi_2\circ (\Phi_1 \circ
\Phi_2)^{k-1}(y_{\sigma}))
\\ \notag &\forall {\,}{\,}y_{{\sigma}_1}{\,}\in {\,}\Delta (\mathcal{M}(B))\backslash\{\Phi_1 \circ \Phi_2)^k(y_{\sigma})\}
\\ \notag
&\mathcal{U}_{B_{p,q}}((\Phi_1 \circ
\Phi_2)^k(y_{\sigma}),\Phi_2\circ (\Phi_1 \circ
\Phi_2)^{k}(y_{\sigma}))> \mathcal{U}_{B_{p,q}}((\Phi_1 \circ
\Phi_2)^{k}(y_{\sigma}),x_{{\sigma}_1})
\\ \notag & \forall {\,}{\,}x_{{\sigma}_1}{\,}\in {\,} \Delta ( \mathcal{M}(A))\backslash\{\Phi_2\circ (\Phi_1 \circ \Phi_2)^{k}(y_{\sigma})\}
\end{align}
for every $k\ge 1$. The  inequality at the periodic value $k=n$ can
be written in terms of the operator $\mathcal{Q}'$ as
\begin{equation}\label{lastineq}
\mathcal{U}_{B_{p,q}}({\mathcal{Q}'}^n(y),\Phi_2\circ
{\mathcal{Q}'}^{n}(y))>
\mathcal{U}_{B_{p,q}}({\mathcal{Q}'}^{n}(y),x_{{\sigma}_1}) \forall
{\,}{\,}x_{{\sigma}_1}{\,}\in {\,}\mathcal{M}(A)
\end{equation}
where the  operator $\mathcal{Q}'$ is given by
\begin{equation}\label{oper2}
\mathcal{Q}'=\Phi_1 \circ \Phi_2 .
\end{equation}
Hence, a periodic action $y_{\sigma}$ of player B satisfies
\begin{equation}\label{mixrer1}
{\mathcal{Q}'}^ny_{\sigma}=y_{\sigma}
\end{equation}
We shall now discuss some standard  strategic form games for which
the periodicity argument applies.

\subsection{Periodic Mixed Strategies in Some  Games}

To see which games have the properties  discussed in the previous
sections, we look at  general characteristics of the payoff matrix.
The maximization of  A's  utility function ${\mathcal{U}_A}_{p,q}$
with respect to $q$ yields
\begin{align}\label{condgen}
p_p^*=
\frac{{\mathcal{U}_A}_{p,q}(a_2,b_2)-{\mathcal{U}_A}_{p,q}(a_2,b_1)}{({\mathcal{U}_A}_{p,q}(a_1,b_1)+{\mathcal{U}_A}_{p,q}(a_2,b_2)-{\mathcal{U}_A}_{p,q}(a_1,b_2)-{\mathcal{U}_A}_{p,q}(a_2,b_1))}
\end{align}
while the maximization with respect to $p$ yields
\begin{align}\label{conditionmax}
q_N^*=
\frac{\mathcal{U}_A(a_2,b_2)-\mathcal{U}_A(a_1,b_2)}{(\mathcal{U}_A(a_1,b_1)+\mathcal{U}_A(a_2,b_2)-\mathcal{U}_A(a_1,b_2)-\mathcal{U}_A(a_2,b_1))}\,
.
\end{align}
The relation (\ref{condgen}) yields the potential mixed periodic
strategy, while relation (\ref{conditionmax}) yields the potential
Nash mixed strategy.

Now, we will exploit the fact that when the opponent plays a mixed
Nash equilibrium strategy, the player's expected utility is
independent of his own randomization over his own strategies, and at
the same time, the utility is maximized. For example, in the case of
player A, this would mean that the corresponding Nash equilibrium
does not depend on $p$ but on $q$. Hence, we can built games in such
a way that the mixed periodic strategies are related to the mixed
Nash equilibria and then explore the consequences of such a
relation.

\subsubsection{First Type of Games}

We take a look at  $2\times 2$ games that  satisfy
\begin{align}\label{ikmplg2}
& p_p^*=q_N^* \\ \notag & q_p^*=p_N^*
\end{align}
where $p_p^*$ and $p_N^*$ are the mixed periodic and mixed Nash
equilibria for player A and $q_p^*$ and $q_N^*$ are those for player
B. Hence, it is obvious how the robustness of the corresponding
expected utilities is achieved. Making use of relations
(\ref{conditionmax}) and (\ref{condgen}), relations (\ref{ikmplg2})
impose some restrictions on the payoff matrices, which are the
following,
 \begin{equation}\label{rest4}
{\mathcal{U}_A}_{p,q}(a_1,b_2)={\mathcal{U}_A}_{p,q}(a_2,b_1),{\,}{\,}{\,}{\mathcal{U}_B}_{p,q}(a_1,b_2)={\mathcal{U}_B}_{p,q}(a_2,b_1)
\end{equation}
Let us illustrate this result for the {\it Battle of Sexes} in Table
\ref{game1m}.
\begin{table}[h]
\centering
  \begin{tabular}{| l |l |l | }
    \hline
       &  $b_1$ & $b_2$ \\ \hline
  $a_1$ & 2,1 & 0,0   \\ \hline
    $a_2$ & 0,0 & 1,2\\
    \hline
 \end{tabular}
\caption{Mixed Strategies Game 1}\label{game1m}
\end{table}
The mixed Nash equilibrium for this game is
$(p_N^*=\frac{2}{3},q_N^*=\frac{1}{3})$. If we maximize A's expected
utility w.r.t. $q$, we obtain $\frac{\partial
{\mathcal{U}_A}_{p,q}}{\partial q}=-1+3p$, hence the mixed periodic
strategy is $p_p^*=1/3$, and again, this is independent  of the
value of $q$. The corresponding  maximization procedure for player B
yields the mixed periodic strategy $q_p^*=2/3$. Let us now examine
the expected utilities of the players. The expected utility for
player A for the mixed ''periodic'' (we shall use the term periodic
even though these strategies are not periodic per se, but these
strategies result from using the first three steps of the mixed
strategies algorithm we described in the previous sections) strategy
$p_p^*=1/3$ is
\begin{equation}\label{mix1}
{\mathcal{U}_A}_{p,q}(p_p^*=1/3,q)=\frac{2}{3}
\end{equation}
and is independent of $q$. By symmetry, the same applies for players
B utility for $q_p^*=2/3$:
\begin{equation}\label{mix12}
{\mathcal{U}_B}_{p,q}(p,q_p^*=2/3)=\frac{2}{3}.
\end{equation}
The expected utilities  of  the players when the opponent plays his
mixed Nash strategy $p_N^*=q_N^*=1/3$ are
\begin{equation}\label{mix1}
{\mathcal{U}_A}_{p,q}(p,q_N^*=1/3)={\mathcal{U}_B}_{p,q}(p_N^*=2/3,q)=\frac{2}{3},
\end{equation}
and these values do not depend on the own action, and the expected
utilities are maximized when the \textit{opponent} plays mixed Nash.
Also notice that the expected utilities for the mixed periodic
strategy take also their \textit{maximum values}. The utilities
corresponding to the periodic strategies and to the mixed Nash
equilibria coincide.

The disadvantage of the mixed Nash strategy, compared  to the mixed
periodic strategy, is that in order for the expected utility to be
maximized, the \textit{opponent} has to play Nash. This renders all
the strategies of the player optimal. However, this does not happen
in the periodic mixed strategy case, where when a player plays his
own periodic mixed strategy, the expected utility is maximized,
\textit{regardless} of what the other player plays.

We observe furthermore that if player A plays  his own Nash mixed
strategy $p_N^*=2/3$, his expected utility is
\begin{equation}\label{mix1}
{\mathcal{U}_A}_{p,q}(p_N^*=2/3,q)=\frac{1}{3}+q.
\end{equation}
Accordingly the expected utility of player B for $q_N^*=1/3$ is
\begin{equation}\label{mix12}
{\mathcal{U}_B}_{p,q}(p,q_N^*=1/3)=\frac{4}{3}-p
\end{equation}
These utilities \textit{depend} on what the other player plays, and
thus the mixed strategies of each player \textit{do not} render the
corresponding payoff robust against  the opponent's strategies.

In contrast, the mixed periodic strategies render the corresponding
payoffs \textit{robust} to what the opponent chooses, and in
addition these \textit{maximize} the expected utility functions.
This result is very intriguing, since the mixed periodic strategies
we found, namely $(p_p^*=1/3,q_p^*=2/3)$ are not rationalizable
actions. Nevertheless, we have seen  that the player who adopts
these strategies always achieves equal or larger payoff in
comparison to the mixed Nash payoff, \textit{regardless} of what his
opponent eventually plays (we assume here $0< p,q <1$).

\subsubsection{Second Type of Games}

We now turn to games that satisfy
\begin{align}\label{ikjfgjmplg2}
& p_p^*=1-q_N^* \\ \notag & q_p^*=1-p_N^*.
\end{align}
These conditions render the corresponding expected utilities robust
against the opponent strategies. In addition, condition
(\ref{ikjfgjmplg2}) implies
 \begin{equation}\label{refgjfjst4}
{\mathcal{U}_A}_{p,q}(a_1,b_1)={\mathcal{U}_A}_{p,q}(a_2,b_2),{\,}{\,}{\,}{\mathcal{U}_B}_{p,q}(a_1,b_1)={\mathcal{U}_B}_{p,q}(a_2,b_2)
\end{equation}
Let us illustrate this result for Game 2  in Table \ref{game2mm}.
\begin{table}[h]
\centering
  \begin{tabular}{| l |l |l | }
    \hline
       &  $b_1$ & $b_2$ \\ \hline
  $a_1$ & 2,5 & 50,6   \\ \hline
    $a_2$ & 3,10 & 2,5\\
    \hline
 \end{tabular}
\caption{Mixed Strategies Game 2}\label{game2mm}
\end{table}
We have  two pure  Nash equilibria,  $(a_1,b_2)$ and $(a_2,b_1)$,
which at the same time are periodic, and the  mixed Nash equilibrium
$(p_N^*=\frac{5}{6},q_N^*=\frac{48}{49})$. The mixed periodic
strategy of A is $p_p^*=1/49$ according to  (\ref{ikjfgjmplg2}), and
that for B is  $q_p^*=1/6$.  The expected utility for player A at
$p_p^*=1/49$ is
\begin{equation}\label{mjfgjix1}
{\mathcal{U}_A}_{p,q}(p_p^*=1/49,q)=\frac{146}{49}
\end{equation}
which  is again independent of $q$. Similarly, that of B  for
$q_p^*=1/6$ is
\begin{equation}\label{mjfgjix12}
{\mathcal{U}_B}_{p,q}(p,q_p^*=1/6)=\frac{35}{6}.
\end{equation}

Again, these coincide with the expected utilities resulting from
playing Nash,
\begin{equation}\label{mixjfgj1}
{\mathcal{U}_A}_{p,q}(p,q_N^*=48/49)=\frac{146}{49},\quad
{\mathcal{U}_B}_{p,q}(p_N^*=5/6,q)=\frac{35}{6}.
\end{equation}
Again, the mixed Nash payoffs are independent of the own action and
the expected utilities are maximized when the \textit{opponent}
plays his mixed Nash strategy. And again,  the expected utilities
for the mixed periodic strategy take their maximum values, which are
equal to the ones obtained for the Nash strategies. However, if
player A plays for instance his own Nash mixed strategy $p_N^*=5/6$,
his expected utility is
\begin{equation}\label{mix1}
{\mathcal{U}_A}_{p,q}(p_N^*=5/6,q)=-\frac{239}{6}q+42
\end{equation}
while the expected utility for player B, when he plays
$q_N^*=48/49$, is
\begin{equation}\label{mix12}
{\mathcal{U}_B}_{p,q}(p,q_N^*=1/3)=\frac{485-239p}{49}
\end{equation}
Again, these   utilities depend on what the \textit{opponent} plays,
in contrast to the  mixed periodic strategies. And again, the two
types of utilities coincide.

As a general remark for the types of games appearing in Tables
\ref{game1m} and \ref{game2mm}, we should note that
$U_i(p_p,q_p)=U_i(p_N,q_N)$, whenever both the Nash and the
periodicity solution concepts require a mixed strategy. This can
also be seen by looking Eqs. (\ref{mjfgjix1}), (\ref{mjfgjix12}) and
(\ref{mixjfgj1}).

Since in game theory, one is often more interested in pure than in mixed solutions, we now look at another game in Table \ref{game2mmnewgame},
\begin{table}[h]
\centering
  \begin{tabular}{| l |l |l | }
    \hline
       &  $b_1$ & $b_2$ \\ \hline
  $a_1$ & 4,0 & -1,-1   \\ \hline
    $a_2$ & -3,1 & 0,4\\
    \hline
 \end{tabular}
\caption{Alternative Example}\label{game2mmnewgame}
\end{table}
For this game, by applying the periodicity algorithm, we find that
the mixed periodic payoffs are
\begin{equation}\label{solutionnewgame1}
{\mathcal{U}_A}_{p,q}(p_p^*=1/2,q_p^*=1/2)=\frac{1}{2},\,\,\,{\mathcal{U}_B}_{p,q}(p_p^*=1/2,q_p^*=1/2)=\frac{3}{2}\,
,
\end{equation}
and the corresponding mixed Nash payoffs are
\begin{equation}\label{solutionnewgame1}
{\mathcal{U}_A}_{p,q}(p_N^*=1/2,q_N^*=1/2)=\frac{1}{2},\,\,\,{\mathcal{U}_B}_{p,q}(p_N^*=1/2,q_N^*=1/2)=\frac{3}{2}\,
,
\end{equation}
so the two solution concepts yield the same mixed payoffs.
However, the pure Nash equilibrium is $(a_2,b_2)$, so the player $B$
gets a higher payoff for the pure Nash strategy. The same would
apply for player A if we transposed the payoffs. Therefore,
sometimes periodicity is better, while at other times Nash is better. This
seems to be a generic characterization which covers more types of
games, compared with the  first and second type games  discussed earlier in
this section.

We now turn to games where  the expected periodic utilities are
actually higher than those from mixed Nash.

\subsubsection{When Periodicity is Better
than Nash Mixed Strategies: The Case of Collective Action Games and
Prisoner Dilemma Type of Games} In this section, we shall treat a
class of games that includes the  ``Pure Public Good'' games. Let us
recall an example. Suppose that two  oil companies want to extract
oil near some island and transfer it to international markets. For
that, a pipeline is needed  The government's public policy allows
only one pipeline, so both companies must share the pipeline, when
it is constructed. Both companies will benefit from it, but  the
question is who is going to fund the construction of this pipeline.

Such games are inherent to problems of collective action
\cite{gamesofstrategy}. In this kind of games, the actions that
yield better payoffs for the players do not belong to the set of
best private interest actions of the players, or more formally, the
Pareto optimal outcome is not necessarily the Nash equilibrium.
% The collective action games come in three forms,  prisoners'
% dilemma, chicken and assurance games.

\noindent The pipeline project, like any Pure Public Good game,
has two characteristic properties that the benefits  are {\it non
excludable} and  {\it non-rival}. Such a game can be represented in
matrix form in Game 3  in Table \ref{game3m} (taken from
\cite{gamesofstrategy}).
\begin{table}[h]
\centering
  \begin{tabular}{| l |l |l | }
    \hline
       &  $b_1$ & $b_2$ \\ \hline
  $a_1$ & 4,4 & -1,6  \\ \hline
    $a_2$ & 6,-1 & 0,0\\
    \hline
 \end{tabular}
\caption{Mixed Strategies Game 3}\label{game3m}
\end{table}
The Nash equilibrium is  $(a_2,b_2)$. The payoffs depend on the
quality and the time that it takes to materialize the project.
Obviously, the optimal action for both players is not to
participate, no matter what the other player does, that is, to act
as a ``free rider''. In contrast, the Pareto  optimum is achieved
when the strategy $(a_1,b_1)$ is adopted by both players.

The social optimal is always achieved when the total sum of the
players payoffs is maximized, but this requires a
\textit{cooperative} way of thinking. Using mixed periodic
strategies, we now want to analyze the game  within a
\textit{non-cooperative} perspective.

Let us analyze the mixed strategies that this game has.  There is no
mixed Nash equilibrium, but only the pure Nash strategy $(a_2,b_2)$.
The expected utilities of the two players for $p=p_N^*=0$, and
$q=q_N^*=0$ are both
\begin{equation}\label{mikjhxjghhg12}
{\mathcal{U}_A}_{p,q}(p_N^*=0,q_N^*=0)={\mathcal{U}_B}_{p,q}(p_N^*=0,q_N^*=0)=0.
\end{equation}
For identifying mixed periodic strategies, we maximize the expected
utility of player A with respect to $q$ and that  of player B with
respect to $p$. This results in  the pure strategies $p_p^*=1$ and
$q_p^*=1$. The expected utilities of both players are maximized for
this periodic strategy,
\begin{equation}\label{mikjhxjghhgfhfhf12}
{\mathcal{U}_A}_{p,q}(p_p^*=1,q_p^*=1)={\mathcal{U}_B}_{p,q}(p_p^*=1,q_p^*=1)=4
\end{equation}
Hence, in this case, the social optimum strategy is realized by
periodic strategies, although we have  used a non-cooperative method
in terms of a self-maximization procedure for each player. The fact
that the two outcomes, that is, non-cooperative and cooperative
ones, coincide depends on the particular details of this game. But
the important insight is that  the non-cooperative optimization
procedure underlying
 the periodic strategies
yields a higher payoff than  the Nash strategies.

We will further analyze this type of games, by exploiting another
example,  Game 4  in Table \ref{game4m}, which again is a collective
action game.
\begin{table}[h]
\centering
  \begin{tabular}{| l |l |l | }
    \hline
       &  $b_1$ & $b_2$ \\ \hline
  $a_1$ & 0,0 & 6,1   \\ \hline
    $a_2$ & 1,6 & 3,3\\
    \hline
 \end{tabular}
\caption{Mixed Strategies Game 4}\label{game4m}
\end{table}
The mixed Nash equilibrium is $p_N^*=\frac{3}{4},q_N^*=\frac{3}{4}$.
If we maximize A's  expected utility subject to $q$ we get
$\frac{\partial \mathcal{U}_A}{\partial q}=-2<0$ $\forall$ $q$.
Hence, the expected utility is maximized when $q=q_p^*=0$, since the
utility is monotonically decreasing with respect to $q$.
Correspondingly, maximizing  B's expected utility with respect to
$p$ we get $\frac{\partial \mathcal{U}_B}{\partial p}=-2<0$
$\forall$ $p$, and  B's  expected utility is maximized when
$p_p^*=0$. Therefore, the periodic strategies are $p_p^*=0,q_p^*=0$.
For  the periodic strategies,
${\mathcal{U}_B}_{p,q}(p_p^*=0,q_p^*=0)=3$ and
${\mathcal{U}_A}_{p,q}(p_p^*=0,q_p^*=0)=3$ while for the Nash
strategies we get ${\mathcal{U}_A}_{p,q}(q_N^*=3/4,p_N^*=3/4)=2.25$
and ${\mathcal{U}_B}_{p,q}(q_N^*=3/4,p_N^*=3/4)=2.25$.

Hence, the periodic  strategy, which is  pure in this game,  again
yields higher payoffs for both players than  the  Nash strategy.

In conclusion, this provides a possible solution to the question
\textit{why the players should play  socially optimal strategies},
instead of individually optimal ones,  \textit{through a strictly
non-cooperative scheme}.

\section{Two-player Simultaneous Strategic Form Games with a Continuum Set of Strategies}

In this section, we shall study the implications of the periodic
strategies algorithm for the case of two player games with
continuous strategies for the players. We shall consider  strategic
form, simultaneous, symmetric games with quadratic payoffs. Many
examples from the economics literature  belong to this class of
games, such as the Cournot and Bertrand duopoly, provision of public
good and search games \cite{moulinraygupta}. while  the periodic
strategies algorithm does not yield as interesting results as  in
the collective action games, we shall nevertheless present it in
order to explore possible applications of the algorithm.

We consider a game with two players $I=1,2$, for which a continuum
set of strategies is available for each player and where the payoffs
are  of the following form:
\begin{align}\label{generalform}
&u_1(x,y)=a_1x+a_2y+a_3xy+a_4x^2+a_5y^2 \\ \notag &
u_2(x,y)=b_1x+b_2y+b_3xy+b_4x^2+b_5y^2
\end{align}
where $x,y$ $\in$ $\mathbb{R}^{+}$. In both cases, the parameters
$a_5,b_5$ are assumed to be negative, to make  the payoffs concave
in the own strategy. For simplicity, we assume that  the game is
symmetric, that is $a_i=b_i$ for all $i$. For  the Nash equilibria,
the following equations must be solved simultaneously :
\begin{align}\label{nashmax}
&\frac{\partial{u_1}}{\partial x}=0,{\,}{\,}{\,}
\frac{\partial{u_2}}{\partial y}=0.
\end{align}
For the periodic strategies, we
  have to  solve simultaneously:
\begin{align}\label{permax}
\frac{\partial{u_1}}{\partial
y}=0,{\,}{\,}{\,}\frac{\partial{u_2}}{\partial x}=0.
\end{align}
In order  for the critical points $(x_N,y_N)$ of equations
(\ref{nashmax}) to be maxima,   the following two conditions have to
be satisfied:
\begin{align}\notag
&D_1=\frac{\partial^2u_1}{\partial x^2}\frac{\partial^2u_1}{\partial y^2}-\frac{\partial^2u_1}{\partial y\partial x}>0, & \frac{\partial^2u_1}{\partial x^2}\rvert_{x_N,y_N}<0\\
&D_2=\frac{\partial^2u_2}{\partial x^2}\frac{\partial^2u_2}{\partial
y^2}-\frac{\partial^2u_2}{\partial y\partial x}>0, &
\frac{\partial^2u_2}{\partial x^2}\rvert_{x_N,y_N}<0.
\end{align}
In the case of general quadratic games, the above conditions become:
\begin{align}
&D_1=-a_3^2+4a_4a_5>0,{\,}{\,}{\,}a_4<0 \\ \notag &
D_2=-b_3^2+4b_4b_5>0,{\,}{\,}{\,}b_4<0
\end{align}
The conditions for maxima  of the periodic strategies algorithm are
\begin{align}
&D_1=-a_3^2+4a_4a_5>0,{\,}{\,}{\,}a_5<0 \\ \notag &
D_2=-b_3^2+4b_4b_5>0,{\,}{\,}{\,}b_5<0.
\end{align}
For symmetric games, the conditions simplify:
\begin{align}
D_1=-a_3^2+4a_4a_5>0,{\,}{\,}{\,}a_4<0,{\,}{\,}{\,}a_5<0.
\end{align}
These last conditions are satisfied for all quadratic games that
satisfy   the convexity condition  imposed above. The Nash
equilibria and the ``periodic'' points are:
\begin{align}
&
x=-\frac{-2a_1b_5-a_3b_2}{-a_3b_3+4a_4b_5},{\,}{\,}{\,}y=-\frac{-2a_4b_2-a_1b_3}{-a_3b_3+4a_4b_5},{\,}{\,}{\,}\mathrm{Nash}
\\ \notag &
x=-\frac{-2a_5b_1+a_2b_3}{a_3b_3-4a_5b_4},{\,}{\,}{\,}y=-\frac{a_3b_1-2a_2b_4}{a_3b_3-4a_5b_4},{\,}{\,}{\,}\mathrm{Periodic}
\end{align}
We  shall now consider two examples of continuous, symmetric
quadratic games,  the ``Cournot Duopoly'' game and the ``Provision
of Public Good'' game.

\subsection{Cournot Duopoly}

The Cournot duopoly quadratic game has the  form
\begin{align}\label{cournotduopoly}
& u_1(x,y)=(P-A(x+y))x-(Bx-Mx^2),\\ \notag &
u_2(x,y)=(P-A(x+y))y-(By-My^2)
\end{align}
Thus, in  (\ref{generalform}) (with $a_i=b_i$), we put
\begin{equation}\label{pardefinition}
a_1=P-B,{\,}{\,}{\,}a_3=-A,{\,}{\,}{\,}a_4=-A+M,{\,}{\,}{\,}a_2=a_5=0.
\end{equation}
The Nash equilibrium is obtained by putting $\frac{\partial
u_1}{\partial x}= 0=\frac{\partial u_2}{\partial y}$, resulting in

\begin{equation}\label{criticalpoints}
x^*_N=y^*_N=\frac{P-B}{3A-2M}
\end{equation}
and the corresponding utilities are:
\begin{equation}\label{tiulicourn}
u_1(x^*_N,y^*_N)= u_2(x^*_N,y^*_N)=\frac{(P-B)^2(A-M)}{(3A-2M)^2}.
\end{equation}
In contrast, for the periodic equilibrium, we need $\frac{\partial
u_2}{\partial x}= 0=\frac{\partial u_1}{\partial y}$, resulting in
\begin{equation}\label{tiulicourn2}
x^*_p=y^*_p=0\text{ and }u_1(x^*_p,y^*_p)= u_2(x^*_p,y^*_p)=0.
\end{equation}
Thus, if $A>M$, the Nash equilibrium results in higher utilities for
the players. This game, however, is degenerate in the sense that
when computing the periodic optimum for A by putting
$0=\frac{\partial u_2}{\partial x}= -Ax$, this does not determine
the optimal value of the opponent's strategy $y$. In fact, for this
game, in (\ref{generalform}), the coefficient $a_5$ of the quadratic
term vanishes by \eqref{pardefinition}, and this causes the
degeneracy.

\subsection{Provision of Public Good Games}

The Provision of Public Good quadratic game has the  form
\begin{align}\label{cournotduopoly2}
& u_1(x,y)=A(x+y)-Cx-B(x+y)^2,\\ \notag &
u_2(x,y)=A(x+y)-Cy-B(x+y)^2.
\end{align}
The Nash equilibria have to satisfy
\begin{equation}\label{criticalpoints2}
x^*_N+y^*_N=\frac{A-C}{2B}
\end{equation}
with  utilities
\begin{equation}\label{tiulicourn2}
u_1(x^*_N,y)=\frac{(A-C)^2}{4B}+Cy, \quad
u_2(x,y^*_N)=\frac{(A-C)^2}{4B}+Cx.
\end{equation}
For the periodic equilibria, we get
\begin{equation}\label{pewer}
x^*_p+y^*_p=\frac{A}{2B}
\end{equation}
with  utilities
\begin{equation}\label{strategiesperiodic}
u_1(x^*_p,y)=\frac{A^2-2AC}{4B}+Cy , \quad
u_2(x,y^*_p)=\frac{A^2-2AC}{4B}+Cx.
\end{equation}
Thus, the periodic utilities are smaller than the Nash ones, since
\begin{equation}\label{combinedutilites}
u_1(x^*_p,y)=u_1(x^*_N,y)-\frac{C^2}{4B}, \quad
u_2(x,y^*_p)=u_2(x,y^*_N)-\frac{C^2}{4B},
\end{equation}
unless $C=0$. At the periodic equilibrium, the players invest too
much, as they ignore their own costs $C$. Again, however, this is a
special case, because of the symmetry of the game.

\section*{Concluding Remarks}

We have presented an intrinsic property of multiplayer, finite,
simultaneous, strategic form games, which we called periodicity of
strategies. We studied the periodicity concept in finite action
games and  proved that every finite action two player strategic form
game has at least one periodic action. Moreover, we proved that the
set of periodic strategies is set stable under the map
$\mathcal{Q}$. An action of some player $x_i$ is periodic if the
operator $\mathcal{Q}$ satisfies $\mathcal{Q}^{n_i}x_i=x_i$, where
$n_i$ is the periodicity number. When mixed strategies are taken
into account,  the  period length reduces to 1. While those
equilibria are different from the Nash ones, the resulting payoffs
for the players are the same, or in some cases, even higher than the
Nash payoffs.  Importantly, the periodic mixed strategy gives
outcomes for each player which do not depend on what the opponent
will play, in contrast to the Nash mixed strategy, where when the
opponent plays equilibrium, the own action does not matter. Thus, we
have found  a solution concept that in that regard is preferable to
the Nash equilibrium. More generally, we have compared our periodic
strategies to the rationalizable strategies of Bernheim
\cite{bernheim} and Pearce \cite{pearce}.

Moreover, the application of the algorithm to collective action
games gives another interesting result. We were able to demonstrate
that the social optimum strategy can be played by adopting a
non-cooperative thinking.

Also, the periodic equilibria can be identified by a straightforward algorithm. \\
                                                                                                                                                                                                                                                                                                    The next step would be the inclusion of mixed strategies in multiplayer games. In that case, the situation will become more complex because  the players could form coalitions. Periodicity then has to be reconsidered under this perspective. Also, one may wish to study Bayesian games with imperfect information from the perspective of our periodicity paradigm.


\begin{thebibliography}{99}


\bibitem{tirole} Fudenberg, Drew; Tirole, Jean (1991), Game theory, MIT Press

\bibitem{newref1} JR Wright, K Leyton-Brown, Beyond Equilibrium: Predicting Human Behavior in Normal-Form Games.
 AAAI

\bibitem{newref2}   M Mariotti, A model of agreements in strategic form games, Journal of Economic Theory, Volume 74, Issue 1, May 1997, Pages 196-217


\bibitem{ozzy} Osborne, Martin J. (2004), An introduction to game theory, Oxford University Press
\bibitem{geanakoplos} J. Geanakoplos, ''Common Knowledge'', (1992), Journal of Economic Perspectives, 6 pp 53-82
\bibitem{neu} Theory of Games and Economic Behavior, Princeton University Press, John von Neumann, Oskar Morgenstern (2004)



\bibitem{bernheim} Bernheim, B.D. (1984),
Rationalizable strategic behavior, Econometrica 52, 1007

\bibitem{pearce} Pearce, D. (1984), Rationalizable strategic behavior and the problem of perfection, Econometrica
52, 1029-1050.
\bibitem{myerson} Myerson, R.B. (1978), Refinements of the Nash equilibrium concept, International Journal
of Game Theory 7, 73-80
\bibitem{rat1} Battigalli, P. (1997), On rationalizability in extensive games, Journal of Economic Theory
74, 40-61

\bibitem{rat2}Pierpaolo Battigalli, Rationalizability in infinite, dynamic games with incomplete information,
Research in Economics 57, 1-38.


\bibitem{rat3} Battigalli, P. (1996), Strategic independence and perfect Bayesian equilibria, Journal of
Economic Theory 70, 201-234.


\bibitem{rat4} Battigalli, P. and M. Siniscalchi (1999), Hierarchies of conditional beliefs, and interactive
epistemology in dynamic games, Journal of Economic Theory 88,
188-230.


\bibitem{rat5} Battigalli, P. and M. Siniscalchi (2002), Strong belief and forward induction reasoning,
Journal of Economic Theory 106, 356-39


\bibitem{rat6} Blume, L.E., Brandenburger, A. and E. Dekel (1991a), Lexicographic probabilities and
choice under uncertainty, Econometrica 59, 61-79.


\bibitem{rat7} Blume, L.E., Brandenburger, A. and E. Dekel (1991b), Lexicographic probabilities and
equilibrium refinements, Econometrica 59, 81-98.


\bibitem{rat8} Dekel, E. and D. Fudenberg (1990), Rational behavior and payoff uncertainty, Journal of
Economic Theory 52, 243-267.




\bibitem{rat10} Rubinstein, A. (1991), Comments on the interpretation of game theory, Econometrica 59,
909-924

\bibitem{rat11} Schuhmacher, F. (1999), Proper rationalizability and backward induction, International
Journal of Game Theory 28, 599-615.




\bibitem{rat12} Asheim, G.B. (2001), Proper rationalizability in lexicographic beliefs, International Journal
of Game Theory 30, 453-478.




\bibitem{rat13} Epstein, L. and T. Wang (1996), "Beliefs about beliefs" without probabilities, Econometrica
64~ 1343-


\bibitem{rat14} Reny, P.J. (1992), Rationality in extensive-form games, Journal of Economic Perspectives
6, 103-


\bibitem{rat15} Stalnaker, R. (1998), Belief revision in games: forward and backward induction, Mathematical
Social Sciences 36, 31-56


\bibitem{rat16} van Damme, E. (1984), A relation between perfect equilibria in extensive form games and
proper equilibria in normal form games, International Journal of
Game Theory 13, 1-13.


\bibitem{rat17} Srihari Govindan, Robert Wilson, 2009. "On Forward Induction," Econometrica, Econometric Society, vol. 77(1), pages 1-28



\bibitem{gamesofstrategy} Avinash K. Dixit, Susan Skeath, Games of Strategy, Second Edition,




\bibitem{moulinraygupta} Herve Moulin, Indrajit Ray, Sonali Sen Gupta, Improving Nash by Correlation in Quadratic Potential Games, 2012





\end{thebibliography}
\end{document}